%% file: main-arxiv.tex
\documentclass{article}

\usepackage{PRIMEarxiv}

\usepackage[utf8]{inputenc} 
\usepackage[T1]{fontenc}    
\usepackage{hyperref}       
\usepackage{url}            
\usepackage{booktabs}       
\usepackage{amsfonts}       
\usepackage{nicefrac}       
\usepackage{microtype}      
\usepackage{lipsum}
\usepackage{fancyhdr}       
\usepackage{graphicx}       
\graphicspath{{media/}}     

\usepackage{amsmath} 
\usepackage{orcidlink}
\usepackage{array}
\usepackage{algorithm}%
\usepackage{algorithmicx}%
\usepackage{algpseudocode}%
\usepackage{listings}%
\usepackage{comment}
\usepackage{bm}

\usepackage{amsthm}
\theoremstyle{plain}
\newtheorem{theorem}{Theorem}[section]

\newtheorem{proposition}[theorem]{Proposition}

\theoremstyle{definition}

\theoremstyle{remark}
\newtheorem{remark}[theorem]{Remark}

\title{eIRWR: Enhanced Iterative Random Walk with Restart for Scalable Root Cause Analysis in Microservices
}

\author{
  Saiful Khan~\orcidlink{0000-0002-6796-5670} \\
  Scientific Computing \\
  Science and Technology Facilities Council \\
  RAL, Didcot, OX11 0QX, UK\\
   \And
  Afrah Farea~\orcidlink{0000-0003-4412-5377}  \\
  Department of Engineering Science \\
  University of Oxford \\
  Parks Road, Oxford, OX1 3PJ, UK \\
}

\begin{document}
\maketitle

\begin{abstract}
Root cause analysis (RCA) in microservice architectures needs to pinpoint the originating faulty service responsible for the cascading symptoms seen across hundreds or thousands of interdependent services. 
Graph-based random walk methods propagate anomaly evidence over the service dependency graph. 
However, existing anomaly-restart walks leave much of the localization signal unused: they restart from the raw anomaly vector, which is dominated by loud downstream victims rather than the quieter source. 
Through a controlled ablation, we first show that the ``resilience damping'' often applied to the transition matrix is mathematically equivalent to raising the restart probability; we therefore benchmark against a restart-tuned Personalized PageRank (PPR) rather than its default configuration. 
We then present \emph{Enhanced Iterative Random Walk with Restart} (eIRWR), which (a)~concentrates restart mass on the most suspicious nodes through power-law teleportation sharpening, (b)~augments the transition matrix with self-loops and backward edges so that probability accumulates at cascade sources, and (c)~refines its belief across an outer loop.
On three large-scale topologies (12K--25K nodes) from the Alibaba Microservice Trace Dataset, eIRWR attains a Mean Reciprocal Rank (MRR) of $0.75$ at moderate root-cause visibility, a $2.8$ times improvement over the best aggregate-metric baseline and well above a restart-tuned PPR. At high visibility it reaches MRR\,$= 0.94$, while running in under 25\,ms on graphs with 17,000 nodes, making it suitable for online deployment.
\end{abstract}

\keywords{Microservices \and Root Cause Analysis \and Random Walk with Restart \and Service Dependency Graph \and Cascading Failures}

\include{sections}


\bibliographystyle{unsrt}  
\bibliography{references}  

\end{document}

%% file: sections.tex
\section{Introduction}
\label{sec:introduction}

Modern cloud-native applications decompose monolithic systems into many loosely coupled microservices that communicate over the network, following the principles of service-oriented computing (SOC)~\cite{Monteiro2020,wang2023soca,Khan:2022:TSC,Khan:2022:JDIM,Khan:2025:SCC,Khan:2023:IEEEAccess}.
While this architecture brings agility and independent scalability, it also creates complex failure modes: a fault in a single service can cascade through chains of synchronous calls, causing latency spikes, error bursts, and degraded user experience across seemingly unrelated services.  Quickly identifying the root cause (the originating faulty service) is critical for minimizing downtime and restoring services.

During an incident, monitoring and anomaly-detection systems~\cite{khedimi2022soca, ahmadpanah2026soca} observe anomaly signals (e.g., elevated latency, increased error rates) at many services simultaneously.  
The root cause analysis (RCA) task is to rank services by their likelihood of being the originating fault, given (i)~the service dependency graph extracted from distributed traces and (ii)~the observed anomaly vector aggregated at the service level.  
The challenge is that the root cause is often not the service with the highest anomaly score: downstream victims that receive cascading failures frequently exhibit louder symptoms than the source itself.

We focus on the aggregate-metric setting: no per-request traces are available, only a service-level call graph and an anomaly vector derived from aggregate metrics (e.g., P99 latency, error rate). This is the most widely applicable scenario because, while some systems support per-request tracing~\cite{yu2021microrank, yu2023tracerank}, many production deployments export only aggregate metrics due to overhead, sampling, or privacy constraints.

Graph-based random walk methods, such as Personalized PageRank (PPR)~\cite{Wu:2020:NOMS}, CloudRanger~\cite{wang2018cloudranger}, and MonitorRank~\cite{kim2013monitorrank}, have emerged as a principled framework for aggregate-metric RCA, propagating anomaly evidence through the dependency graph to localize the root cause. However, these methods walk on the dependency weight matrix~$W$, which encodes normalized call frequencies.  This matrix captures which services communicate, but not how faults propagate. It does not model the direction of fault cascades (backward, from callees to callers) or the fact that resilient services absorb failures without propagating them. Consequently, the walker can disperse probability toward irrelevant regions of the graph, especially when the root cause has low visibility in the anomaly vector.

We revisit the anomaly-restart random walk and explore its localization power. Starting with the anomaly-seeded walk over the dependency graph (called the Iterative Random Walk with Restart or IRWR), we conduct a controlled ablation study. 
We find that two common mechanisms credited for accuracy gains, e.g., transposing the transition matrix and damping it with a uniform ``resilience'' factor, do not explain the improvement. 
A uniform resilience factor simply makes the walk sub-stochastic, which is algebraically similar to raising the restart probability. When the baseline PPR is given the same effective restart, the gap closes.

The improvement that truly stands out comes from reshaping how the walk restarts and what it focuses on. Our method, Enhanced IRWR (eIRWR), implements several key changes: 
(i)~It sharpens the restart (or teleportation) distribution using a power law, which allows the restart mass to concentrate on the most suspicious nodes instead of the more obvious targets.
(ii)~It enhances the transition matrix by adding self-loops and backward edges, ensuring that probability accumulates at the sources of cascades.
(iii)~It refines our understanding of which nodes are suspicious through an outer loop.

To summarize, our contributions include:

\begin{enumerate}
    \item[(a)]
    We show that the accuracy improvement of resilience-damped walks over PPR is fully explained by an increased effective restart probability, not by directionality or resilience heterogeneity. We therefore evaluate against a restart-tuned PPR, a fairer and stronger baseline than the default configuration used in the previous work.
    
    \item[(b)]
    We propose Enhanced Iterative Random Walk with Restart (eIRWR), which is a refined and structurally enhanced anomaly-restart walk such that it combines power-law teleportation sharpening, self-loop and backward-edge augmentation, and belief refinement.
        
    \item[(c)]
    We evaluate on three large-scale topologies (12K--25K nodes) from the Alibaba Microservice Trace Dataset~\cite{alibaba2022trace}, comparing against eight aggregate-metric baselines. 
\end{enumerate}

\section{Related Work}\label{sec:related-work}

Root Cause Analysis (RCA) and anomaly detection in microservices have been extensively studied, with recent surveys, e.g., ~\cite{soldani2022survey}, providing comprehensive overviews. 
Simple metric-based RCA techniques fall into two main categories: direct metric ranking and graph-aware analysis. Direct metric ranking ranks services based on anomaly scores, like latency spikes or error rates. However, this approach often misidentifies downstream issues as root causes, especially when visibility into the actual cause is low.

\subsection{Graph-Based Random Walk}

Graph-based methods address the metric-based limitations by modeling the service dependencies as a directed graph and performing random walks to propagate anomaly evidence.
Brin and Page~\cite{brin1998pagerank} introduced PageRank for ranking web pages by structural importance. In the RCA setting, standard PageRank identifies structurally important services but is entirely anomaly-unaware. It produces the same ranking regardless of the current incident.
MicroRCA~\cite{Wu:2020:NOMS} addresses this by adapting PPR, substituting the uniform teleportation vector with the observed anomaly vector~$\bm{s}_{\mathrm{obs}}$, so the walker preferentially restarts at anomalous nodes. The iteration is $\bm{r}^{(k+1)} = (1-\alpha)\,\bm{W}\,\bm{r}^{(k)} + \alpha\,\bm{s}_{\mathrm{obs}}$,
where~$\bm{W}$ is the row-stochastic dependency weight matrix.  
MicroRCA is the closest prior work to our eIRWR method: it shares our power-iteration framework on the same topology. 
However, rather than seeding the walk with raw anomaly scores at a fixed restart rate, eIRWR sharpens and structurally augments the restart vector to focus probability density on cascade sources.

CloudRanger~\cite{wang2018cloudranger} uses a second-order random walk where the transition probabilities are influenced by both edge weights and neighboring anomaly scores. At each step, the probability of moving from node \( i \) to neighbor \( j \) is directly proportional to \( w_{ij} \cdot s_{\mathrm{obs}}(j) \), effectively steering the walker towards anomalous regions. However, this approach relies on stochastic sampling through many short walks, which can make it less efficient compared to deterministic power-iteration methods and more sensitive to the number of walks conducted. On the other hand, MonitorRank~\cite{kim2013monitorrank} modifies the transition matrix based on the anomaly correlation between connected services using the formulation \( \bm{W}_{\mathrm{corr}} = \mathrm{diag}(\sqrt{\bm{s}})\,\bm{W}\,\mathrm{diag}(\sqrt{\bm{s}}) \). This method maintains matrix sparsity while enhancing transition weights between jointly anomalous services. Although MonitorRank emphasizes that the context of anomalies should influence the walk, it does not account for the underlying mechanics of fault propagation.

\subsection{Graph Traversal and Pruning}

This family of methods avoids iterative convergence and instead traverses the call graph for a fixed number of hops. Alibaba's MicroHECL~\cite{li2021microhecl} traces anomaly symptoms backward through the call graph using transposed weights, accumulating evidence at each hop for a fixed maximum depth. This method is computationally efficient, operating at $O(d\,|E|)$.  However, its fixed-depth traversal limits its ability to capture long-range dependencies, and it lacks the restart mechanism in RWR-based methods, which can lead to drifting into irrelevant graph regions. In contrast, Microsoft's TraceDiag~\cite{jia2023tracediag} adopts a two-stage approach by pruning low-anomaly nodes to reduce noise before running PPR on the pruned subgraph. While this strategy enhances scalability and is beneficial in noisy environments, the application of a hard pruning threshold may inadvertently discard the true root cause when its visibility is low, particularly in scenarios where RCA is more critical.

\subsection{Spectrum-Based and Trace-Level}

This family leverages disaggregated per-trace data rather than aggregate anomaly vectors.
MicroRank~\cite{yu2021microrank} combines Spectrum-Based Fault Localization (SBFL) with extended PageRank. The Ochiai formula scores each service~$i$ by its participation in failing vs.\ passing traces such that $\mathit{Score}_i = e_f^{(i)} / \sqrt{(e_f^{(i)} + e_p^{(i)})(e_f^{(i)} + n_f^{(i)})}$.
The spectrum scores are used as teleportation weights in a subsequent PageRank, and the final ranking is a weighted combination of spectrum and PageRank scores.

TraceRank~\cite{yu2023tracerank} extends MicroRank with three innovations: (i)~a correlation score $\mathrm{corr}_i = |\mathrm{PCC}(t_i, t_{\mathrm{frontend}})|$ that measures Pearson correlation between each service's processing time and the front-end latency; (ii)~an augmented adjacency matrix with forward, backward, and self-loop transitions weighted by the correlation scores; and (iii)~a ranked list correction algorithm that promotes services that are both spectrally suspicious and statistically abnormal. TraceRank achieves strong results but requires disaggregated per-trace data, which is not always available at scale. 
In contrast, our method operates on the aggregate anomaly vector and the dependency graph alone, making it applicable in settings where only monitoring metrics and call graph topology are observed.

Neighbor anomaly correlation methods such as PAL~\cite{pal2020rca} and FChain~\cite{fchain2022} rank services by the statistical correlation between their anomaly and that of their neighbors. A common formulation aggregates caller anomaly (cascade signal), own anomaly, and callee anomaly. These approaches are fast but rely on first-order neighborhood information and cannot capture multi-hop propagation patterns.

\subsection{Fault Management in SOC}

Beyond the microservice RCA literature, fault management has a long history in the service-oriented computing community.  
Model-based approaches diagnose faults from explicit behavioral models of the composed services, for example through partially observed stochastic Petri nets~\cite{bhandari2020soca}, while graph-matching approaches compare an anomalous system state against a library of known anomalous graphs~\cite{brandon2020jss}.  
Data-driven pipelines detect anomalies in service platforms~\cite{khedimi2022soca} or combine detection with autonomous in-kernel remediation for cloud-native systems~\cite{ahmadpanah2026soca}, and deep learning has been applied to performance diagnosis in cloud microservices~\cite{wu2020icsoc}.  
These approaches either require explicit behavioral models, curated anomaly libraries, or trained models.  
eIRWR complements them: it consumes only the service dependency graph and aggregate monitoring metrics that any service-oriented deployment already exposes, requires no training, and localizes the faulty service rather than only flagging the anomaly.

In summary, our method inherits the efficiency and iterative convergence of power-iteration walks (e.g., PPR/MicroRCA) and, like them, operates on aggregate metrics without per-trace data (as in Table~\ref{tab:method-comparison}).  
What sets eIRWR apart is not a new transition matrix: we show that the resilience-damped transition reduces to an effective restart. Instead, it is a shaped restart: power-law teleportation sharpening together with self-loop and backward-edge augmentation, which concentrate probability at cascade sources. Unlike TraceRank and MicroRank it requires no per-trace data; unlike MicroHECL, it converges to a stationary distribution rather than truncating at a fixed depth.

\begin{table}[t]
\centering
\caption{Comparison of RCA methods. \textbf{Sharpen}~= shaped restart (teleportation sharpening and/or self-loop / backward-edge augmentation); \textbf{Iter.}~= iterative convergence; \textbf{Trace}~= requires per-trace data; \textbf{Weighted}~= uses call-frequency weights.}
\label{tab:method-comparison}

\begin{tabular}{
@{}
>{\raggedright\arraybackslash}p{0.31\columnwidth} @{\hspace{4mm}}
>{\raggedleft\arraybackslash}p{0.12\columnwidth} @{\hspace{4mm}}
>{\raggedleft\arraybackslash}p{0.08\columnwidth} @{\hspace{4mm}}
>{\raggedleft\arraybackslash}p{0.08\columnwidth} @{\hspace{4mm}}
>{\raggedleft\arraybackslash}p{0.08\columnwidth} @{\hspace{4mm}}
>{\raggedleft\arraybackslash}p{0.13\columnwidth}
@{}
}
    \toprule
    \textbf{Method}       & \textbf{Sharpen} & \textbf{Iter.} & \textbf{Trace} & \textbf{Weighted} \\
    \midrule
    Raw Anomaly            & --         & --         & --         & --         \\
    PageRank               & --         & \checkmark & --         & \checkmark \\
    MicroRCA (PPR)         & --         & \checkmark & --         & \checkmark \\
    CloudRanger            & --         & --         & --         & \checkmark \\
    MonitorRank            & --         & \checkmark & --         & \checkmark \\
    In-Degree Centrality   & --         & --         & --         & \checkmark \\
    MicroHECL              & --         & --         & --         & \checkmark \\
    TraceDiag              & --         & \checkmark & --         & \checkmark \\
    Neighbour Correlation  & --         & --         & --         & \checkmark \\
    MicroRank              & --         & \checkmark & \checkmark & \checkmark \\
    TraceRank              & \checkmark & \checkmark & \checkmark & --         \\
    \midrule
    IRWR (diagnostic baseline) & --   & \checkmark & --         & \checkmark \\
    \textbf{eIRWR (ours)} & \checkmark & \checkmark & --         & \checkmark \\
    \bottomrule
\end{tabular}
\end{table}

\section{Proposed Method}
\label{sec:method}

We present our graph-based RCA method in two stages: (a) a basic anomaly-restart walk (IRWR) that we use mainly to establish a fair, restart-tuned baseline, and (b) our enhanced method (eIRWR).  
Given a dependency graph extracted from distributed traces and an observed aggregate anomaly vector during an incident, eIRWR identifies the most probable root cause. It does so by iterating an anomaly-seeded random walk, whose restart distribution is sharpened. The transition matrix is structurally augmented to concentrate probability at cascade sources.
In the remainder of the paper, we distinguish the restart rate (the scalar probability~$\alpha$), the restart distribution (the teleportation vector the walk restarts into), and restart shaping (modifying that distribution).

\subsection{Problem Formulation}\label{sec:problem}

Consider a microservice system with $N$ services.  We model the architecture as a directed weighted graph $G = (V, E, \bm{W})$, where:
\begin{enumerate}
  \item[-] $V = \{v_1, \ldots, v_N\}$ is the set of microservices,
  \item[-] $E \subseteq V \times V$ is the set of directed edges (service $v_i$ calls service $v_j$ if $(v_i, v_j) \in E$),
  \item[-] $\bm{W} \in \mathbb{R}^{N \times N}$ is the row-stochastic weight matrix encoding normalized request rates.
\end{enumerate}
During an incident, a monitoring system observes an anomaly vector $\bm{s}_{\mathrm{obs}} \in \mathbb{R}^N_{\geq 0}$ with $\|\bm{s}_{\mathrm{obs}}\|_1 = 1$, where $s_i$ quantifies the anomaly severity at service~$v_i$ (e.g., normalized latency deviation).
The RCA task is to produce a ranking of services by their likelihood of being the root cause such that the originating service whose failure cascades through the dependency graph to produce the observed symptoms.

\subsection{Graph Construction}\label{sec:graph}

\noindent\textbf{Dependency weight matrix.}
We extract the dependency graph from distributed tracing data (e.g., the Alibaba Microservice Trace Dataset~\cite{alibaba2022trace}). Each edge $(v_i, v_j)$ carries a request count $\lambda_{ij}$ (the number of calls from $v_i$ to $v_j$ observed in the trace window). We construct the row-stochastic weight matrix~$\bm{W}$:
\begin{equation}\label{eq:weight}
  w_{ij} = \frac{\lambda_{ij}}{\sum_{k=1}^{N} \lambda_{ik}}\,,
\end{equation}
so that each row sums to one: $\sum_j w_{ij} = 1$. Entry $w_{ij}$ represents the fraction of service~$v_i$'s outgoing traffic directed to service~$v_j$. High-traffic edges carry larger weights, reflecting that services with more intense communication propagate faults more readily.

\vspace{1mm}
\noindent\textbf{Resilience-damped transition matrix.}
As a modeling device, one may posit that not all dependencies propagate failures equally, and introduce a per-service resilience factor $R_i \in [0,1]$ representing the probability that service~$v_i$ absorbs a failure without propagating it.  Damping the transition weights by $(1 - R_i)$ gives the matrix
\begin{equation}\label{eq:fault-prop}
  \bm{M}_R = \mathrm{diag}(\bm{1} - \bm{R})\,\bm{W}\,,
\end{equation}
which is applied in the same power iteration as PPR.  
A note of caution is in order, because this device is easily over-interpreted. When $\bm{R}$ is uniform ($R_i \equiv R$), Eq.~\eqref{eq:fault-prop} simply scales every transition by the constant $(1-R)$, making the transition matrix sub-stochastic. As we show analytically in Section~\ref{sec:vs-ppr} and confirm empirically in the ablation of Section~\ref{sec:ablation}, a uniformly sub-stochastic transition is algebraically equivalent to running PPR with a higher restart probability: the leaked mass is exactly compensated by the restart term. Uniform resilience damping therefore contributes no directional or structural information of its own: its entire effect is an increase in the effective restart rate. We keep Eq.~\eqref{eq:fault-prop} because it is the transition used by our basic method (IRWR) and by several prior walks, but we are careful to attribute its effect correctly, and we compare against a restart-tuned PPR that already enjoys this effect. The improvements that survive this stronger baseline are described next and form the substance of eIRWR.

\subsection{Basic IRWR}\label{sec:irwr-basic}

The Iterative RWR computes a root cause probability vector $\bm{r} \in \mathbb{R}^N$ via anomaly-seeded power iteration on the transition matrix $\bm{M}_R = \mathrm{diag}(\bm{1}-\bm{R})\,\bm{W}$:
\begin{equation}\label{eq:rwr}
  \bm{r}^{(k+1)}
    = (1 - \alpha)\,\bm{M}_R\,\bm{r}^{(k)}
    + \alpha\,\bm{s}_{\mathrm{obs}}\,,
  \qquad k = 0, 1, 2, \ldots
\end{equation}
where $\alpha \in (0,1)$ is the restart probability and the initial state is $\bm{r}^{(0)} = \bm{s}_{\mathrm{obs}}$.  Because $\bm{W}$ is row-stochastic over out-edges, one application of $\bm{M}_R$ accumulates each callee's score into its callers (weighted by traffic share and damped by $(1-R_i)$); the restart term re-injects the observed anomaly at every step.  Setting $\bm{R}=\bm 0$ recovers exactly the PPR of MicroRCA.

\vspace{1mm}
\noindent\textbf{Convergence.}
Iteration stops when $\|\bm{r}^{(k+1)} - \bm{r}^{(k)}\|_1 < \varepsilon$ (default $\varepsilon = 10^{-6}$).  Since $\bm{M}_R$ is (sub-)stochastic and $\alpha > 0$, the mapping is a contraction with rate at most $(1-\alpha)$, guaranteeing convergence. Using sparse CSR representations, each iteration costs $O(|E|)$ operations, and convergence typically occurs within 15--50 iterations in practice.

\vspace{1mm}
\noindent\textbf{Root cause identification.}
The root cause is the node with the highest converged probability $v^* = \arg\max_i\, r_i$.
A ranked list $v_{(1)}, v_{(2)}, \ldots$ by decreasing $r_i$ provides a candidate shortlist for on-call engineers.

\subsection{Enhanced IRWR (eIRWR)}\label{sec:eIRWRnhanced}

The basic IRWR uses a fixed resilience $R$ and the raw anomaly vector as the teleportation target.  We extend it with three mechanisms that improve localization accuracy, particularly at low root-cause visibility.

\subsubsection{Anomaly-Conditioned Resilience} 
\label{sec:adaptive-R}
In practice, a service experiencing anomalous load or errors has degraded resilience: its circuit breakers, timeouts, and retry budgets are strained. We model this by making the resilience depend on the observed anomaly:
\begin{equation}\label{eq:adaptive-R}
  R_i = R_{\mathrm{base}} \cdot \exp\!\bigl(-\beta\, \hat{s}_i\bigr)\,,
\end{equation}
where $\hat{s}_i = s_i / \max_j s_j$ is the normalized anomaly score and $\beta \geq 0$ is a sensitivity parameter.  When $\beta = 0$, all services share the base resilience $R_{\mathrm{base}}$; as $\beta$ increases, anomalous services become less resilient and propagate faults more readily. This creates an anomaly-amplifying feedback loop in the transition matrix: services that are already distressed contribute more strongly to the random walk.

The adaptive transition matrix becomes:
\begin{equation}\label{eq:M-base}
  \bm{M}_{\mathrm{base}}
    = \mathrm{diag}\!\bigl(\bm{1} - \bm{R}(\bm{s}_{\mathrm{obs}})\bigr)\,\bm{W}\,.
\end{equation}

\subsubsection{Backward Transitions and Self-Loops} \label{sec:backward}

The basic IRWR only walks along existing dependency edges.  In some topologies, the root cause may be structurally distant from the most
symptomatic region.  We add two mechanisms:

\vspace{1mm}
\noindent\textbf{Backward transitions.}
For each forward edge $(v_i, v_j) \in E$, we add a backward edge $(v_j, v_i)$ with weight proportional to the caller's belief score:
\begin{equation}\label{eq:backward}
  \bm{A}_{\mathrm{bwd}}[j, i] = \rho \cdot C_i\,,
\end{equation}
for every forward edge $(v_i, v_j) \in E$ whose reverse $(v_j, v_i) \notin E$, with $\rho \in [0, 1)$ a discount factor and
$C_i$ the belief score defined in Section~\ref{sec:belief}. Backward edges allow the walker to explore callers of anomalous services, which is essential when the root cause is upstream of the most visible symptoms.

\vspace{1mm}
\noindent\textbf{Self-loops for suspicious nodes.}
A root cause node tends to be more anomalous than any of its neighbors (it is the source, not a victim).  We add self-loops to capture this signature:
\begin{equation}\label{eq:self-loop}
  \bm{A}_{\mathrm{self}}[i,i]
    = \max\!\bigl(0,\; C_i - \max_{j \in \mathcal{N}(i)} M_{ij}\bigr)\,,
\end{equation}
where $\mathcal{N}(i)$ is the neighborhood of node~$i$.  A node receives a self-loop only if its belief exceeds the maximum outgoing
transition weight, encouraging the walker to linger at likely root causes.

The combined transition matrix is:
\begin{equation}\label{eq:M-combined}
  \bm{M}
    = \mathrm{RowNorm}\!\bigl(
        \bm{M}_{\mathrm{base}} \cdot \mathrm{diag}(\bm{C})
        + \bm{A}_{\mathrm{bwd}}
        + \bm{A}_{\mathrm{self}}
      \bigr)\,,
\end{equation}
where $\mathrm{RowNorm}(\cdot)$ normalizes each row to sum to one.

\subsubsection{Power-Law Teleportation Sharpening}\label{sec:teleport}

The standard teleportation vector $\bm{s}_{\mathrm{obs}}$ spreads probability mass across all nodes, including many with only background noise. We sharpen the teleportation by raising it to a power~$q$:
\begin{equation}\label{eq:teleport}
  \bm{v} = \frac{\bm{b}^{\,q}}{\|\bm{b}^{\,q}\|_1}\,,
  \qquad q \geq 1\,,
\end{equation}
where $\bm{b}$ is the current belief vector (see Section~\ref{sec:belief}).  For $q = 1$, this reduces to standard teleportation.  For $q > 1$, mass concentrates at the most anomalous nodes, suppressing restarts to noisy background services.  This is particularly effective at low root-cause visibility, where noise constitutes a larger fraction of the signal.

\vspace{1mm}
\noindent Adaptive exponent.
We adaptively set~$q$ based on the signal quality $\sigma = b_{\max} / \bar{b}$ (ratio of peak to mean belief):
\begin{equation}\label{eq:adaptive-q}
  q_{\mathrm{adapt}} = 1 + (q - 1) \cdot \min\!\Bigl(1,\;
    \max\!\bigl(0,\; \tfrac{\sigma - 5}{15}\bigr)\Bigr)\,.
\end{equation}
When the signal is weak ($\sigma < 5$), $q_{\mathrm{adapt}} = 1$ (uniform teleportation) to encourage broad exploration.  When the signal
is strong ($\sigma > 20$), the full sharpening~$q$ is applied.
The thresholds 5 and 20 were fixed a priori as round values spanning weak and strong signal concentration, before any evaluation was run, and were held constant across all topologies and visibility levels.

\subsection{Belief Refinement}\label{sec:belief}

eIRWR wraps the inner power iteration in an outer loop that progressively refines the belief about which nodes are suspicious. At outer iteration~$t$:
\begin{equation}\label{eq:belief}
  \bm{b}^{(t)} = (1 - \mu)\,\bm{s}_{\mathrm{obs}} + \mu\,\bm{r}^{(t-1)}\,,
\end{equation}
where $\mu \in [0, 1]$ is a momentum parameter (default $\mu = 0.1$). The belief~$\bm{b}^{(t)}$ influences both the transition matrix
(through~$\bm{C}$) and the teleportation vector (through Eq.~\ref{eq:teleport}).  A small $\mu$ ensures the belief stays close to the raw observation, avoiding confirmation bias where the walk reinforces its own prior.

The outer loop converges when $\|\bm{r}^{(t)} - \bm{r}^{(t-1)}\|_1 < \varepsilon_{\mathrm{outer}}$. In practice, a single outer iteration ($n_{\mathrm{outer}} = 1$) suffices for most incidents.

Algorithm~\ref{alg:irwre} summarizes the complete eIRWR procedure.

\begin{algorithm}[t]
\caption{eIRWR: Sharpened Anomaly-Restart Random Walk}
\label{alg:irwre}
\begin{algorithmic}[1]
\Require Weight matrix $\bm{W}$, anomaly vector $\bm{s}_{\mathrm{obs}}$, parameters $R_{\mathrm{base}}, \beta, \alpha, \rho, q, \mu, \varepsilon, n_{\mathrm{outer}}$
\Ensure Root cause probability vector $\bm{r}$
\State $\hat{\bm{s}} \gets \bm{s}_{\mathrm{obs}} / \max(\bm{s}_{\mathrm{obs}})$ \Comment{normalize anomaly}
\State $\bm{R} \gets R_{\mathrm{base}} \cdot \exp(-\beta\,\hat{\bm{s}})$ \Comment{adaptive resilience}
\State $\bm{M}_{\mathrm{base}} \gets \mathrm{diag}(\bm{1} - \bm{R})\,\bm{W}$ \Comment{base propagation}
\State $\bm{r} \gets \bm{s}_{\mathrm{obs}}$ \Comment{initialize belief}
\For{$t = 1$ \textbf{to} $n_{\mathrm{outer}}$}
  \State $\bm{b} \gets (1-\mu)\,\bm{s}_{\mathrm{obs}} + \mu\,\bm{r}$ \Comment{belief refinement}
  \State $\bm{C} \gets \bm{b} / \max(\bm{b})$
  \State $\bm{M} \gets \mathrm{RowNorm}(\bm{M}_{\mathrm{base}} \cdot \mathrm{diag}(\bm{C}) + \bm{A}_{\mathrm{bwd}} + \bm{A}_{\mathrm{self}})$
  \State $\bm{v} \gets \bm{b}^{\,q_{\mathrm{adapt}}} / \|\bm{b}^{\,q_{\mathrm{adapt}}}\|_1$ \Comment{sharpened teleportation}
  \State $\bm{r}_{\mathrm{in}} \gets \bm{r}$
  \Repeat \Comment{inner power iteration}
    \State $\bm{r}_{\mathrm{prev}} \gets \bm{r}_{\mathrm{in}}$
    \State $\bm{r}_{\mathrm{in}} \gets (1-\alpha)\,\bm{M}\,\bm{r}_{\mathrm{prev}} + \alpha\,\bm{v}$
  \Until{$\|\bm{r}_{\mathrm{in}} - \bm{r}_{\mathrm{prev}}\|_1 < \varepsilon$}
  \State $\bm{r} \gets \bm{r}_{\mathrm{in}}$
\EndFor
\State \Return $\bm{r}$;\; root cause $v^* = \arg\max_i\, r_i$
\end{algorithmic}
\end{algorithm}

\subsection{Complexity Analysis}\label{sec:complexity}

Each inner iteration performs one sparse matrix--vector product ($O(|E|)$) plus $O(N)$ work for the teleportation term.  Construction
of~$\bm{M}$ requires $O(|E|)$ for the forward transitions, $O(|E|)$ for the backward edges, and $O(N)$ for self-loops and
row normalization. Total cost per incident is $O\bigl(n_{\mathrm{outer}} \cdot (K_{\mathrm{inner}} \cdot |E| + |E|)\bigr)$,
where $K_{\mathrm{inner}}$ is the number of power-iteration steps (typically 15--50).  All matrices are stored in CSR format, so memory
is $O(|E| + N)$.

\begin{figure*}
    \centering
    \includegraphics[width=0.85\linewidth]{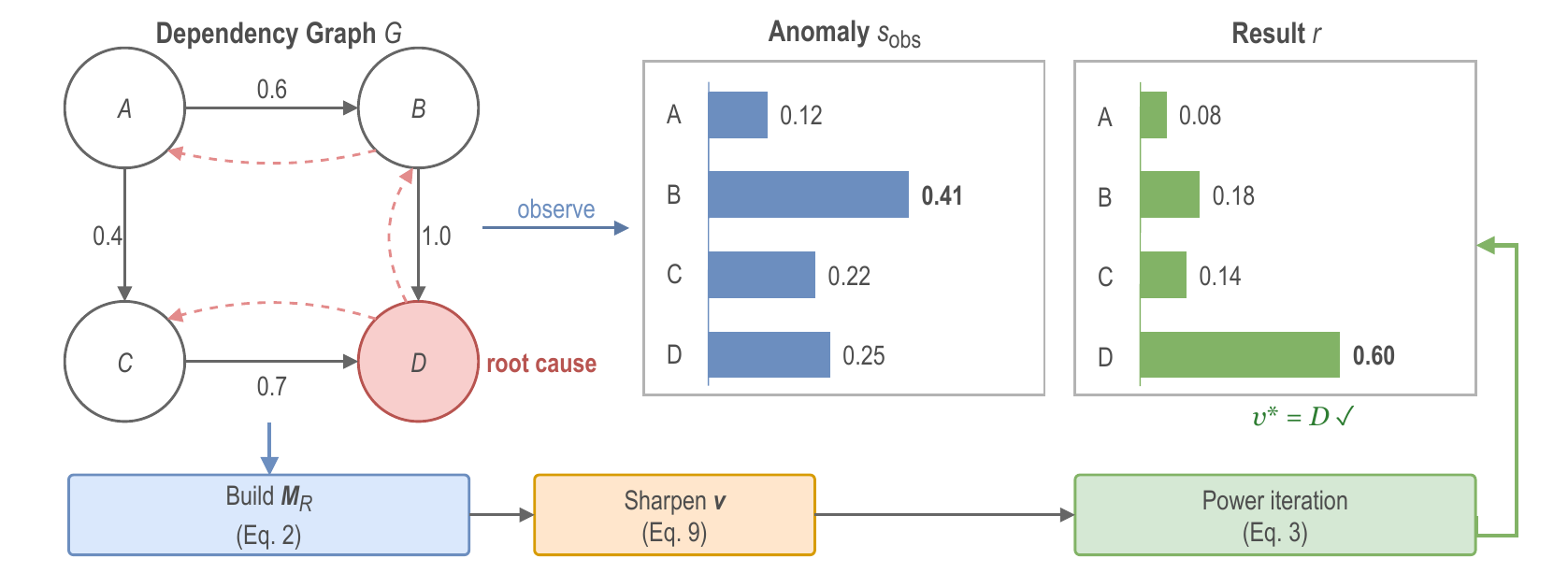} 
    \caption{A simple example of eIRWR pipeline.  Service $D$ (root cause) fails and cascades symptoms backward to callers $B$ and $C$ (dashed arrows). However, the observed anomaly vector, $s_{obs}$, shows $B$ as the loudest signal (a victim, not the cause). eIRWR builds the transition matrix, sharpens the restart (teleportation) vector, and runs power iteration to correctly identify $D$ as the most probable root cause in the converged distribution~$\bm{r}$.}
    \label{fig:irwr-overview}
\end{figure*}

\subsection{Comparison with Personalized PageRank (PPR)}\label{sec:vs-ppr}

Basic IRWR and the PPR at the core of MicroRCA share the same power iteration and differ only by the resilience damping
of Eq.~\eqref{eq:fault-prop}:
\begin{align}
  \text{PPR:}  \quad &\bm{r}^{(k+1)} = (1-\alpha)\,\bm{W}\,\bm{r}^{(k)}
    + \alpha\,\bm{s}_{\mathrm{obs}} \label{eq:ppr} \\
  \text{IRWR:} \quad &\bm{r}^{(k+1)} = (1-\alpha)\,\bm{M}_R\,\bm{r}^{(k)}
    + \alpha\,\bm{s}_{\mathrm{obs}}, \label{eq:irwr-vs}
\end{align}
with $\bm{M}_R = \mathrm{diag}(\bm 1-\bm R)\bm W$ as in Eq.~\eqref{eq:fault-prop}. Both walk on the same matrix $\bm{W}$ and in the same direction; the only change is the multiplicative damping $\mathrm{diag}(\bm 1-\bm R)$.  It is tempting to read this damping as a new ``fault-propagation'' model, but with uniform resilience it has a much simpler, and much less impressive, interpretation.

\begin{proposition}[Resilience damping is an effective restart]
\label{prop:effective-restart}
    Let $R_i \equiv R$ be uniform, so $\bm{M}_R = (1-R)\,\bm{W}$. Then the converged IRWR score with restart $\alpha$ induces the same node ranking as PPR with restart 
    \begin{equation}\label{eq:alpha-eff}
          \alpha_{\mathrm{eff}} = 1 - (1-\alpha)(1-R) = \alpha + R - \alpha R .
    \end{equation}
\end{proposition}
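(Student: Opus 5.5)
We will compute both fixed points in closed form and show that they are positive scalar multiples of one another; a positive rescaling cannot change the ranking. Write $\gamma = (1-\alpha)(1-R)$, so that $\alpha_{\mathrm{eff}} = 1-\gamma$. With uniform resilience, the IRWR recursion of Eq.~\eqref{eq:irwr-vs} becomes
\[
\bm{r}^{(k+1)} = \gamma\,\bm{W}\,\bm{r}^{(k)} + \alpha\,\bm{s}_{\mathrm{obs}} .
\]
The map is affine with linear part $\gamma\bm{W}$. The convergence remark in Section~\ref{sec:irwr-basic} shows this map is a contraction, since $\gamma<1$ and $\bm{W}$ is row-stochastic. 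It therefore has a unique fixed point, and that fixed point is the converged score regardless of $\bm{r}^{(0)}$. Solving the fixed-point equation gives
\[
\bm{r}^{\mathrm{IRWR}} = \alpha\,(\bm{I}-\gamma\bm{W})^{-1}\bm{s}_{\mathrm{obs}} = \alpha\sum_{k\ge 0}\gamma^k\bm{W}^k\bm{s}_{\mathrm{obs}} .
\]

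Next we apply the same computation to PPR of Eq.~\eqref{eq:ppr} with restart $\alpha_{\mathrm{eff}}$. Its linear part is $(1-\alpha_{\mathrm{eff}})\bm{W} = \gamma\bm{W}$, exactly the same operator as in IRWR. Its fixed point is therefore
\[
\bm{r}^{\mathrm{PPR}} = \alpha_{\mathrm{eff}}\,(\bm{I}-\gamma\bm{W})^{-1}\bm{s}_{\mathrm{obs}} .
\]
Comparing the two expressions gives $\bm{r}^{\mathrm{IRWR}} = (\alpha/\alpha_{\mathrm{eff}})\,\bm{r}^{\mathrm{PPR}}$. The factor $\alpha/\alpha_{\mathrm{eff}}$ is a strictly positive constant because $\alpha>0$ and $\alpha_{\mathrm{eff}}\ge\alpha>0$. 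Multiplying every entry by the same positive constant preserves all pairwise order relations $r_i>r_j$ and all ties. Hence the two rankings agree, and in particular $\arg\max_i r_i$ is the same. We will also note the interpretation: the IRWR vector has total mass $\alpha/\alpha_{\mathrm{eff}}<1$ (when $R>0$), because the damping leaks mass. Renormalising it gives exactly the PPR vector.

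The only step needing care is justifying the Neumann series, i.e.\ that $\bm{I}-\gamma\bm{W}$ is invertible and that the iteration converges to its fixed point. The operator acting on $\bm{r}$ is $\bm{W}$ applied as in Eq.~\eqref{eq:rwr}. Since $\bm{W}$ is row-stochastic, its norm is $\|\bm{W}\|_\infty = 1$ (its column-sum norm $\|\bm{W}\|_1$ need not be 1). The spectral radius is still $\rho(\bm{W})\le\|\bm{W}\|_\infty = 1$, so $\rho(\gamma\bm{W})\le\gamma<1$. This gives the invertibility and the geometric series in either norm, and standard iteration theory gives convergence from any start.

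One degenerate case should be checked separately. If $\alpha_{\mathrm{eff}}=1$, which happens when $R=1$, both scores reduce to multiples of $\bm{s}_{\mathrm{obs}}$ and the claim holds trivially.
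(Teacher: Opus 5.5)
Your argument is correct and matches the paper's proof: both fixed points are positive multiples of $(\bm{I}-\gamma\bm{W})^{-1}\bm{s}_{\mathrm{obs}}$ with ratio $\alpha/\alpha_{\mathrm{eff}}$. Your $\|\bm{W}\|_\infty$ and spectral-radius argument also makes explicit the invertibility that the paper takes for granted. One aside is false, though the proof does not use it: $\bm{W}$ is row-stochastic but acts on column vectors, so $\bm{1}^\top\bm{W}\neq\bm{1}^\top$ in general, the IRWR mass need not equal $\alpha/\alpha_{\mathrm{eff}}$, and the PPR vector need not sum to one (only the ratio of the two masses is $\alpha/\alpha_{\mathrm{eff}}$), so drop or rephrase the renormalisation remark.
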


\begin{proof}
    At convergence, Eq.~\eqref{eq:irwr-vs} gives $\bm r = \alpha\bigl(\bm I-(1-\alpha)(1-R)\bm W\bigr)^{-1}\bm s_{\mathrm{obs}}$, and PPR with restart $\alpha_{\mathrm{eff}}$ gives $\bm r' = \alpha_{\mathrm{eff}}\bigl(\bm I-(1-\alpha_{\mathrm{eff}})\bm W\bigr)^{-1}\bm s_{\mathrm{obs}}$.  Choosing $\alpha_{\mathrm{eff}}$ as in Eq.~\eqref{eq:alpha-eff} makes $(1-\alpha_{\mathrm{eff}})=(1-\alpha)(1-R)$, so $\bm r$ and $\bm r'$ are equal up to the positive scalar $\alpha/\alpha_{\mathrm{eff}}$, which does not change the ranking.
\end{proof}

\begin{remark}\label{rem:alpha-eff}
    For this paper's default $\alpha=0.15,\,R=0.1$, Eq.~\eqref{eq:alpha-eff} gives $\alpha_{\mathrm{eff}} = 1-(1-0.15)(1-0.1) = 0.235$: basic IRWR is exactly PPR at restart $0.235$.
\end{remark}

Proposition~\ref{prop:effective-restart} shows that basic IRWR is not a new graph model at all.  The ablation in Section~\ref{sec:ablation} confirms this equality numerically and shows that walking on the true transpose $\bm{W}^\top$ (a genuinely ``backward'' matrix) is in fact catastrophic.  The consequence for evaluation is that basic IRWR must be benchmarked against a restart-tuned PPR, not the default; once that is done, its advantage disappears.  The gains that remain, and they are large, come from the restart-shaping and structural mechanisms of eIRWR, which we now motivate.

On the example graph of Fig.~\ref{fig:irwr-overview}, this is the difference between a naive anomaly restart, which piles probability on the loudest victim~$B$, and eIRWR's sharpened, self-loop-augmented restart, which concentrates probability at the originating fault~$D$.

\subsection{Example}
\label{sec:worked-example}

Fig.~\ref{fig:irwr-overview} illustrates the eIRWR pipeline on a small example graph.  A fault at service $D$ (the root cause) cascades backward through the dependency graph: callers $B$ and $C$ observe elevated latency, and their callers inherit weaker symptoms.
The observed anomaly vector~$\bm{s}_{\mathrm{obs}}$ (top-right) shows that the loudest signal is at node~$B$ (a victim), not at the true root cause~$D$.  By sharpening the restart onto the most self-contained anomaly and letting probability accumulate there via self-loops, the converged vector~$\bm{r}$ correctly identifies $D$ as the most probable root cause.

\section{Experimental Results}\label{sec:experiments}

\subsection{Experiment Setup}\label{sec:setup}

\noindent\textbf{Dataset.}
The Alibaba Microservice Trace Dataset (v2022)~\cite{alibaba2022trace} provides distributed traces from a production cluster. We select three representative call-graph snapshots of increasing scale, see Table~\ref{tab:dataset-stats}.

\begin{table}[h]
\centering
\caption{Dataset characteristics for the three evaluation topologies.}
\label{tab:dataset-stats}

\begin{tabular}{
  >{\raggedright\arraybackslash}p{0.18\columnwidth}
  >{\raggedleft\arraybackslash}p{0.18\columnwidth}
  >{\raggedleft\arraybackslash}p{0.12\columnwidth}
  >{\raggedleft\arraybackslash}p{0.12\columnwidth}
  >{\raggedleft\arraybackslash}p{0.12\columnwidth}
}
    \toprule
    \textbf{Topology} & \textbf{Trace Rows} & \textbf{Nodes} & \textbf{Edges} & \textbf{Avg Deg.} \\
    \midrule
    CG84  &  8,469,061 & 12,453 & 30,742 & 2.47 \\
    CG136 & 15,415,374 & 17,842 & 56,185 & 3.15 \\
    CG297 & 20,605,953 & 24,952 & 87,914 & 3.52 \\
    \bottomrule
\end{tabular}
\end{table}

\begin{figure*}
    \centering
    \includegraphics[width=.85\linewidth]{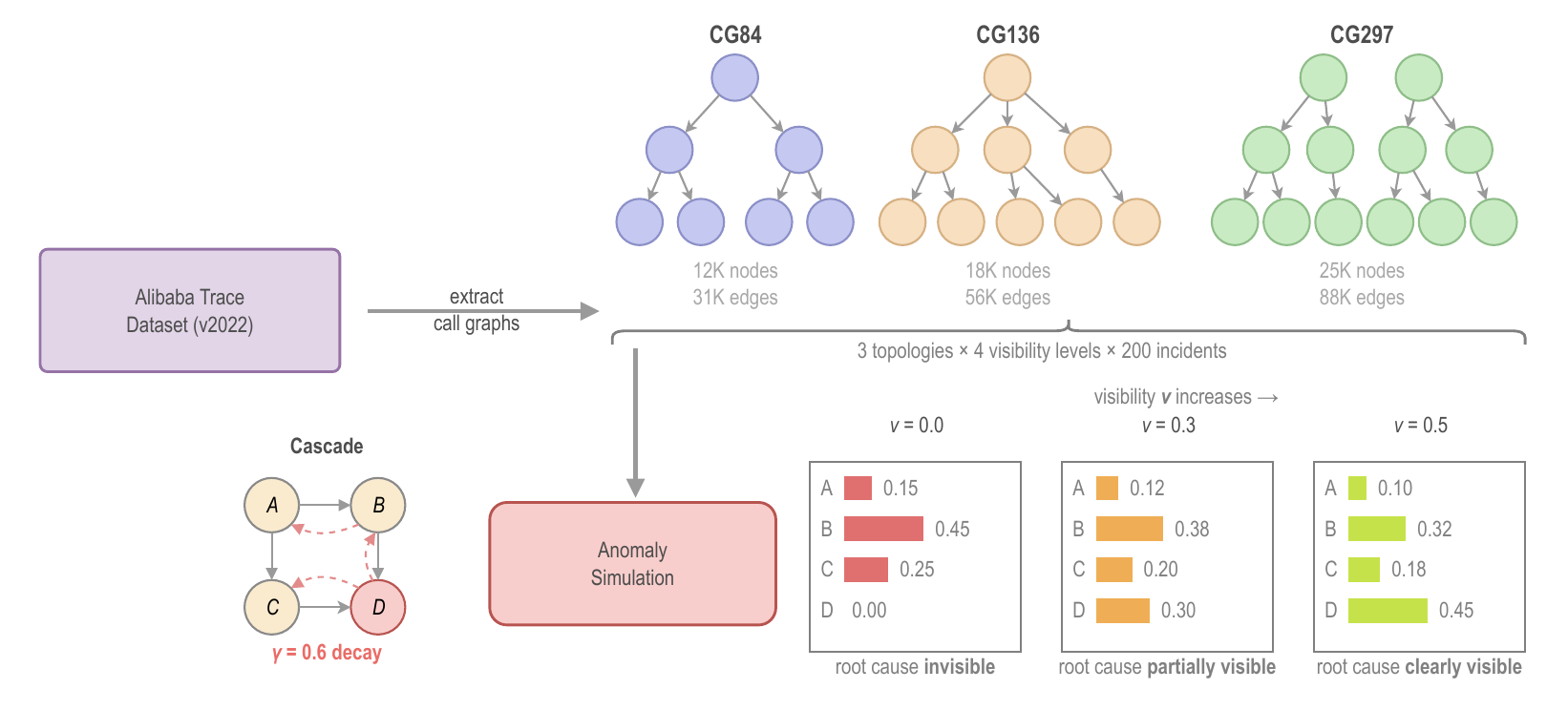} 
    \caption{Traces from the Alibaba production cluster are aggregated into three call-graph topologies of increasing scale (top). For each topology, we simulate anomaly incidents by selecting a root cause and propagating faults backward with geometric decay (bottom left). The visibility parameter~$v$ controls how much anomaly signal the root cause~$D$ contributes to the observed vector~$\bm{s}_{\mathrm{obs}}$: at $v\!=\!0$, the root cause is entirely masked by cascading symptoms; at $v\!=\!0.5$, it is the loudest signal.}
    \label{fig:eval-pipeline}
\end{figure*}

\textbf{Anomaly simulation.}
For each topology, we simulate anomaly incidents by selecting a random callee node as the root cause and propagating a fault signal backward through the dependency graph.  The propagation uses geometric decay ($\gamma = 0.6$) over 5~hops, with additive Gaussian noise ($\sigma = 0.05$).  
The root-cause visibility parameter $v \in \{0.0, 0.1, 0.3, 0.5\}$ controls how visible the root cause is in the observed anomaly vector: at $v = 0.0$, the root cause signal is entirely masked by cascading symptoms; at $v = 0.5$, it is moderately visible.  We run 200~incidents per topology at each visibility level; the same incidents (identical random seeds) are presented to every method, giving the paired design assumed by the significance tests in Section~\ref{sec:main-results}. Fig.~\ref{fig:eval-pipeline} depicts the end-to-end evaluation pipeline and the effect of the visibility parameter on the observed anomaly vector.

\vspace{1mm}\noindent
\textbf{Baselines.}
We compare against eight methods that, like ours, operate on aggregate anomaly vectors and the service-level call graph: MicroRCA~\cite{Wu:2020:NOMS} (PPR), CloudRanger~\cite{wang2018cloudranger} (second-order random walk), MonitorRank~\cite{kim2013monitorrank} (anomaly-correlated walk), Corr-Fusion (a correlation-fusion composite we construct in the spirit of MonitorRank: an anomaly-teleport walk on $\beta\,\bm{W} + (1-\beta)\,\bm{C}$, where~$\bm{C}$ scales each edge by the anomaly scores of both endpoints), In-Degree Centrality, MicroHECL~\cite{li2021microhecl} (backward traversal), TraceDiag~\cite{jia2023tracediag} (pruning + PPR), and Neighbour Correlation (PAL/FChain-style~\cite{pal2020rca, fchain2022}).

We exclude trace-level methods (MicroRank~\cite{yu2021microrank}, TraceRank~\cite{yu2023tracerank}) from the comparison because they require per-request span data, a fundamentally different and richer input signal.  Under simulation, this difference is decisive: the per-trace pass/fail labels these methods consume are derived directly from the injected fault, so they behave as an oracle-informed upper bound (MicroRank, for example, attains MRR~$\approx 0.9$ even at zero root-cause visibility, where no aggregate-metric method can exceed MRR~$= 0.14$).  Including them in the aggregate-metric comparison would conflate input richness with algorithmic merit; we therefore treat them as an upper bound on a strictly richer input signal.

\vspace{1mm}\noindent
\textbf{Metrics.}
We report Precision at Rank~$k$ (PR@$k$, for $k \in \{1, 3, 5\}$), Mean Reciprocal Rank (MRR), and average rank of the true root cause.
All metrics are averaged over incidents within each topology, then averaged across the three topologies.

\vspace{1mm}\noindent
\textbf{Parameters.}
Unless stated otherwise, we use restart probability $\alpha = 0.15$, base resilience $R_{\mathrm{base}} = 0.1$, anomaly
sensitivity $\beta = 2.0$, backward weight $\rho = 0.3$, teleportation exponent $q = 2.0$, momentum $\mu = 0.1$, and convergence tolerance
$\varepsilon = 10^{-6}$.
These values were fixed during development on small synthetic graphs before the evaluation topologies were processed, and were not tuned on the evaluation incidents; the same values are used for all topologies and visibility levels.

\subsection{Results}
\label{sec:main-results}

\begin{table*}[htb]
\centering
\caption{RCA performance averaged over three topologies (CG84, CG136, CG297). $v$~denotes root-cause visibility.}
\label{tab:main-results}

\resizebox{0.95\textwidth}{!}{%

\begin{tabular}{l|rrr|rrr|rrr|rrr}
    \toprule
    & \multicolumn{3}{c|}{$v = 0.0$}
    & \multicolumn{3}{c|}{$v = 0.1$}
    & \multicolumn{3}{c|}{$v = 0.3$}
    & \multicolumn{3}{c}{$v = 0.5$} \\
    \textbf{Method} & PR@1 & PR@5 & MRR
                    & PR@1 & PR@5 & MRR
                    & PR@1 & PR@5 & MRR
                    & PR@1 & PR@5 & MRR \\
    \midrule
    MicroRCA      & 0.02 & 0.09 & 0.06 & 0.04 & 0.16 & 0.11 & 0.17 & 0.34 & 0.26 & 0.34 & 0.55 & 0.46 \\
    CloudRanger   & 0.00 & 0.01 & 0.00 & 0.00 & 0.01 & 0.01 & 0.00 & 0.01 & 0.01 & 0.01 & 0.01 & 0.01 \\
    MonitorRank   & 0.02 & 0.10 & 0.06 & 0.05 & 0.16 & 0.11 & 0.17 & 0.35 & 0.27 & 0.35 & 0.54 & 0.46 \\
    Corr-Fusion   & 0.02 & 0.09 & 0.06 & 0.05 & 0.16 & 0.11 & 0.17 & 0.35 & 0.27 & 0.34 & 0.55 & 0.46 \\
    In-Degree     & 0.00 & 0.00 & 0.00 & 0.00 & 0.00 & 0.00 & 0.00 & 0.00 & 0.00 & 0.00 & 0.00 & 0.00 \\
    MicroHECL     & \textbf{0.07} & 0.17 & 0.12 & \textbf{0.07} & 0.18 & 0.12 & 0.08 & 0.18 & 0.13 & 0.08 & 0.18 & 0.13 \\
    TraceDiag     & 0.02 & 0.05 & 0.04 & 0.03 & 0.10 & 0.08 & 0.10 & 0.25 & 0.18 & 0.25 & 0.44 & 0.35 \\
    Neigh.\ Corr. & 0.00 & 0.00 & 0.00 & 0.00 & 0.00 & 0.01 & 0.00 & 0.01 & 0.01 & 0.00 & 0.01 & 0.02 \\
    \midrule
    IRWR          & 0.03 & 0.18 & 0.10 & \textbf{0.07} & 0.31 & \textbf{0.19} & 0.40 & 0.58 & 0.50 & 0.71 & 0.98 & 0.83 \\
    eIRWR        & 0.03 & \textbf{0.27} & \textbf{0.14} & 0.05 & \textbf{0.34} & \textbf{0.19} & \textbf{0.58} & \textbf{0.95} & \textbf{0.75} & \textbf{0.90} & \textbf{0.99} & \textbf{0.94} \\
    \bottomrule
\end{tabular}

}
\end{table*}

\begin{figure}[htb]
\centering
    \includegraphics[width=0.65\linewidth]{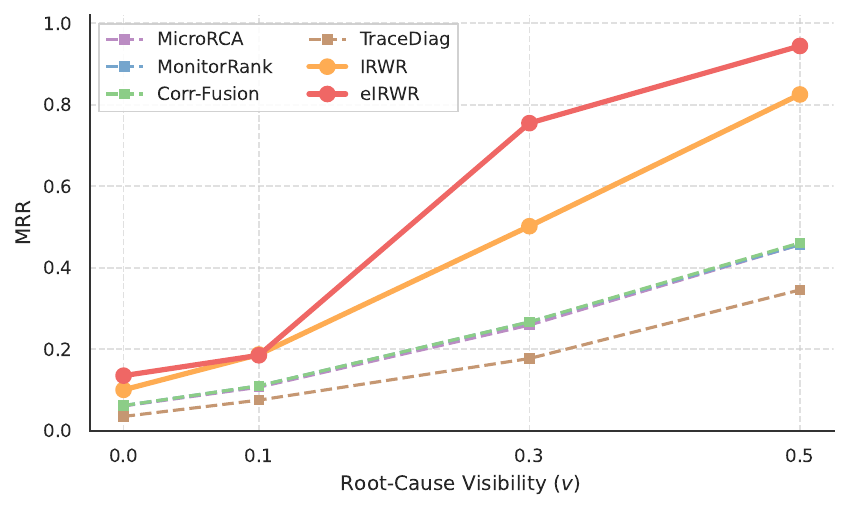} 
    \caption{MRR vs root-cause visibility, averaged over three topologies. eIRWR consistently outperforms all aggregate-metric baselines across all visibility levels, with the gap widening as the root cause becomes more
    visible.}
    \label{fig:mrr-visibility}
\end{figure}

Table~\ref{tab:main-results} presents the main comparison.
Fig.~\ref{fig:mrr-visibility} shows MRR as a function of root-cause visibility, averaged across the three topologies.

eIRWR outperforms all baselines by a wide margin. At moderate visibility ($v = 0.3$), eIRWR achieves MRR~$= 0.75$ and  PR@1~$= 0.58$, compared to the best aggregate-metric baseline (MonitorRank/Corr-Fusion, MRR~$= 0.27$, PR@1~$= 0.17$), a $\mathbf{2.8\times}$ improvement in MRR.  At $v = 0.5$, eIRWR reaches MRR~$= 0.94$ and PR@1~$= 0.90$, correctly identifying the root cause as the top-ranked service in $90\%$ of incidents.

The enhancements, not the transition matrix, drive the gain. Basic IRWR improves on MicroRCA at $v=0.3$ (MRR 0.50 vs.\ 0.26), but the ablation of Section~4.3 shows this is entirely an effective-restart effect: a restart-tuned PPR matches basic IRWR exactly (Proposition~1). The gain that survives a tuned baseline comes from eIRWR's restart sharpening and structural augmentation. These lift MRR from 0.50 to 0.75 at $v=0.3$, PR@5 from 0.58 to 0.95, and PR@1 from 0.40 to 0.58. eIRWR does not merely nudge scores but reliably surfaces the true cause into the shortlist. At $v=0.1$ it reduces the average rank of the true root cause from 688 to 202.

Structural and sampling baselines collapse at scale. In-Degree Centrality, CloudRanger, and Neighbour Correlation achieve near-zero MRR ($<0.02$) at every visibility level, and MicroHECL is non-trivial but flat (MRR $\approx 0.12$ regardless of visibility). We analyze these failure modes in Section~5.

\subsection{Ablation}\label{sec:ablation}

To attribute the improvement correctly, we decompose our method into the components claimed by prior work and by us. Each variant is evaluated on the same CG84 incidents (200 per visibility, seed~42)
with the same graph; Table~\ref{tab:ablation} reports the result. Three conclusions follow. A single seed on one topology suffices here because the ablation is diagnostic rather than comparative: the second conclusion is an exact algebraic identity (Proposition~\ref{prop:effective-restart}) that holds on any graph, and Table~\ref{tab:per-topology} confirms that the headline gain transfers to CG136 and CG297.

\begin{table}[t]
\centering
\caption{Ablation on CG84 (MRR, 200 incidents/level, seed 42).  All variants use $\alpha=0.15$ unless noted.  ``$\bm{W}^\top$'' walks on the true transpose; ``$\bm{M}_R$'' is the resilience-damped matrix ($R=0.1$).}
\label{tab:ablation}

\begin{tabular}{lrrr}
    \toprule
    \textbf{Variant} & \textbf{$v{=}0.1$} & \textbf{$v{=}0.3$} & \textbf{$v{=}0.5$} \\
    \midrule
    MicroRCA: PPR on $\bm{W}$          & 0.14 & 0.28 & 0.47 \\
    PPR on true transpose $\bm{W}^\top$ & 0.01 & 0.01 & 0.02 \\
    Basic IRWR on $\bm{M}_R$ ($R{=}0.1$) & 0.20 & 0.55 & 0.84 \\
    PPR on $\bm{W}$, tuned $\alpha{=}0.235$ & 0.20 & 0.55 & 0.84 \\
    \midrule
    \textbf{eIRWR (full)}            & \textbf{0.20} & \textbf{0.77} & \textbf{0.92} \\
    \bottomrule
\end{tabular}
\end{table}

``Backward direction'' is not the mechanism. 
Walking on the genuine transpose $\bm{W}^\top$, which propagates caller$\to$callee and pushes mass toward leaf services, is catastrophic (MRR $\approx 0.01$). MicroRCA's PPR on $\bm{W}$ already accumulates callee scores into callers; there is no additional ``backward'' matrix to be had.

Resilience damping is an effective restart.  Basic IRWR ($\bm{M}_R$, $R=0.1$) and PPR with $\alpha_{\mathrm{eff}}=0.235$ produce identical MRR at every visibility ($0.20/0.55/0.84$), exactly as Proposition~\ref{prop:effective-restart} predicts ($1-(1{-}0.15)(1{-}0.1)=0.235$). The entire IRWR-over-MicroRCA gain is a restart-rate effect that a practitioner obtains for free by tuning $\alpha$; it is not evidence of a new fault-propagation model.

The surviving gain is eIRWR's. Against the strong, restart-tuned baseline ($0.55$ at $v=0.3$), eIRWR still improves MRR to $0.77$ ($+40\%$) through teleportation sharpening, self-loops, and backward-edge augmentation.  
We note that the anomaly-conditioned resilience (Section~\ref{sec:adaptive-R}) contributes little on its own (varying $\beta$ over $[0,10]$ moves MRR by $<0.001$); the dominant enhancement is
the restart shaping.


\subsection{Cross-topology Generalization}

\begin{table}[htb]
\centering
\caption{MRR at $v = 0.3$ for each topology.}
\label{tab:per-topology}
\begin{tabular}{lrrrr}
    \toprule
    \textbf{Method}  & \textbf{CG84} & \textbf{CG136} & \textbf{CG297} & \textbf{Mean} \\
    \midrule
    MicroRCA      & 0.28 & 0.25 & 0.25 & 0.26 \\
    MonitorRank   & 0.29 & 0.25 & 0.26 & 0.27 \\
    Corr-Fusion   & 0.29 & 0.25 & 0.26 & 0.27 \\
    TraceDiag     & 0.21 & 0.17 & 0.16 & 0.18 \\
    \midrule
    IRWR          & 0.51 & 0.48 & 0.52 & 0.50 \\
    eIRWR        & \textbf{0.76} & \textbf{0.77} & \textbf{0.73} & \textbf{0.75} \\
    \bottomrule
\end{tabular}
\end{table}

\begin{figure}[htb]
\centering
    \includegraphics[width=0.65\linewidth]{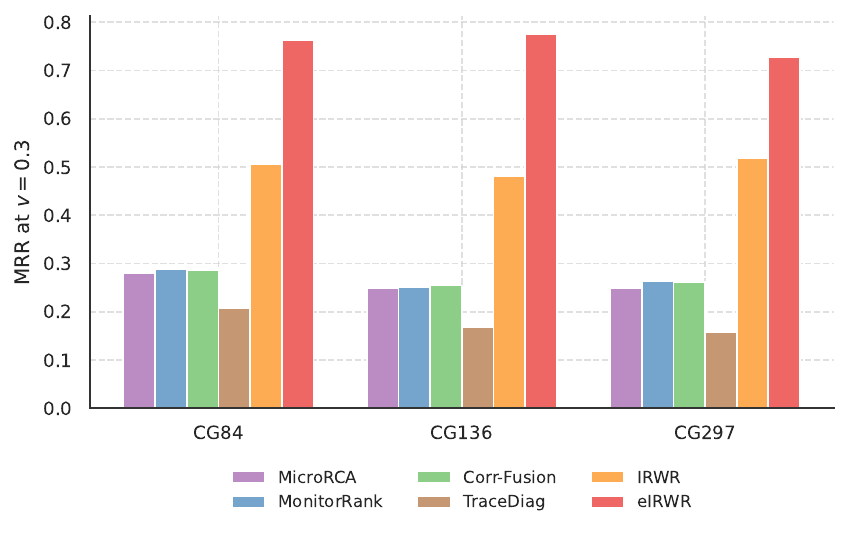} 
    \caption{MRR by method across three topologies at $v = 0.3$. eIRWR maintains strong performance across all scales, while the gap to baselines widens on larger graphs.}
    \label{fig:per-topology}
\end{figure}

eIRWR attains MRR~$\geq 0.73$ on all three topologies, with no systematic degradation as the graph grows from 12K to 25K nodes. The best aggregate-metric baseline never exceeds MRR~$= 0.29$.
Table~\ref{tab:per-topology} and Fig.~\ref{fig:per-topology} give the per-topology breakdown; CloudRanger, In-Degree, MicroHECL, and Neighbor Correlation achieve near-zero or flat performance on all three graphs and are omitted from the table for brevity.


\begin{table}[t]
\centering
\caption{Wilcoxon signed-rank tests: eIRWR vs.\ each baseline at $v = 0.3$.  $r$~= effect
size.}
\label{tab:wilcoxon}
\begin{tabular}{lrrl}
    \toprule
    \textbf{Baseline}  & \textbf{$p$-value} & \textbf{$r$} & \textbf{Sig.\ ($\alpha\!=\!0.05$)} \\
    \midrule
    MicroRCA      & $9.2\times10^{-24}$ & 0.75 & Yes \\
    CloudRanger   & $7.2\times10^{-35}$ & 0.87 & Yes \\
    MonitorRank   & $5.2\times10^{-23}$ & 0.76 & Yes \\
    Corr-Fusion   & $3.2\times10^{-23}$ & 0.74 & Yes \\
    In-Degree     & $7.2\times10^{-35}$ & 0.87 & Yes \\
    MicroHECL     & $3.8\times10^{-31}$ & 0.83 & Yes \\
    TraceDiag     & $1.1\times10^{-27}$ & 0.81 & Yes \\
    Neigh.\ Corr. & $5.7\times10^{-34}$ & 0.86 & Yes \\
    \midrule
    IRWR          & $2.9\times10^{-9}$  & 0.60 & Yes \\
    \bottomrule
\end{tabular}
\end{table}

The Friedman test rejects the equality of all methods at every visibility level ($p < 10^{-200}$). Paired Wilcoxon signed-rank tests confirm that eIRWR outperforms every baseline at $v=0.3$ ($p < 10^{-8}$). Effect sizes are $r \geq 0.74$ against all literature baselines and $r = 0.60$ against basic IRWR (Table~\ref{tab:wilcoxon}, computed over 200 incidents on CG297).

\subsection{Computational Efficiency}\label{sec:efficiency}

Fig.~\ref{fig:scalability} shows runtime and peak memory as a
function of graph size.  Table~\ref{tab:computational} reports values
at the largest scale of the runtime sweep (17,468~nodes, 55,125~edges).
The scalability sweep is constructed from CG297, the largest topology: we rebuild the dependency graph from increasing prefixes of its call-graph trace ($10^{5}$ to $5\times10^{6}$ records), yielding graphs of 3,971 to 17,468 nodes (55,125 edges at the largest point). The largest sweep point remains below CG297's full size in Table~\ref{tab:dataset-stats} (24,952 nodes) because node count grows sub-linearly in the number of records: later records increasingly revisit services already present in the graph. CG84 and CG136 are not swept separately, since the sweep's purpose is to isolate the effect of graph size under a fixed topology, and their full sizes (12,453 and 17,842 nodes) already lie within the swept range.

\begin{figure*}[t]
\centering
    \includegraphics[width=.95\textwidth]{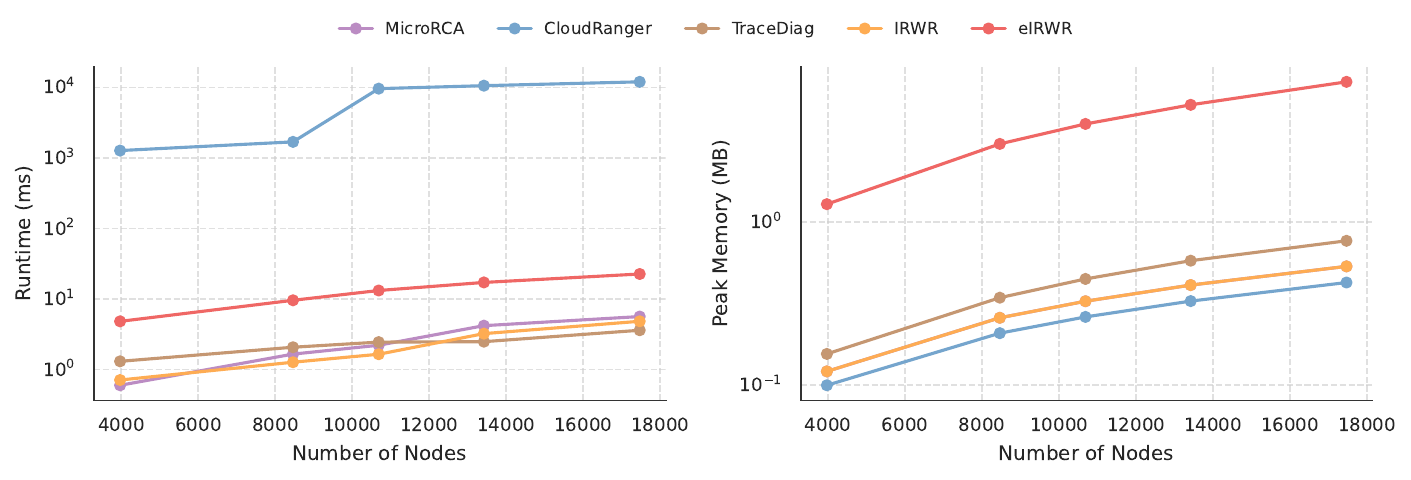} 
    \caption{(a)~Runtime vs. graph size and (b)~peak memory vs.\ graph size. IRWR has comparable cost to MicroRCA; eIRWR is $\approx 4\times$ slower due to the outer belief loop and backward transitions, but remains under 25\,ms even at 17K~nodes.}
    \label{fig:scalability}
\end{figure*}

\begin{table}[t]
\centering
\caption{Computational cost at the largest scale of the runtime sweep ($N = 17,468$, $|E| = 55,125$).}
\label{tab:computational}
\begin{tabular}{lrr}
    \toprule
    \textbf{Method} & \textbf{Time (ms)} & \textbf{Memory (MB)} \\
    \midrule
    MicroRCA        & 5.6      & 0.5 \\
    CloudRanger     & 12,012.7 & 0.4 \\
    TraceDiag       & 3.6      & 0.8 \\
    \midrule
    IRWR            & 4.8      & 0.5 \\
    eIRWR          & 22.6     & 7.2 \\
    \bottomrule
\end{tabular}
\end{table}

IRWR runs in 4.8\,ms, comparable to MicroRCA (5.6\,ms) and TraceDiag (3.6\,ms). This is expected since all three perform sparse power iteration; the difference is the transition matrix ($\bm{M}_R$ vs.\ $\bm{W}$). eIRWR takes 22.6\,ms due to the additional construction of backward transitions and self-loops, but this remains well within operational latency budgets for online RCA systems. CloudRanger is orders of magnitude slower (${\sim}12$\,s). Covering a 17K-node graph requires 50,000 sampled walks of length 20 per incident, and even then most nodes receive no visits. Fig.~\ref{fig:convergence} shows the number of iterations to convergence as a function of the restart probability~$\alpha$.  At the default $\alpha = 0.15$, IRWR converges in ${\sim}28$ iterations and eIRWR in ${\sim}65$ iterations.  Higher~$\alpha$ yields faster convergence at the cost of reduced exploration depth.

\begin{figure}[t]
    \centering
    \includegraphics[width=0.60\linewidth]{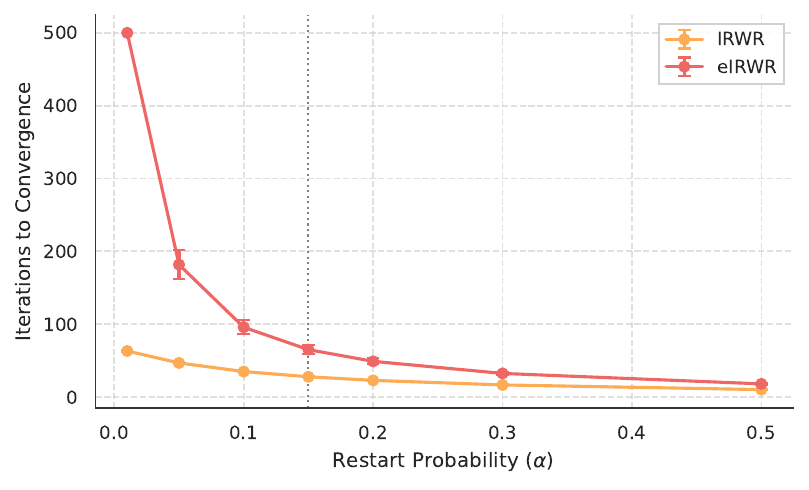} 
    \caption{Iterations to convergence vs.\ restart probability~$\alpha$ for IRWR and eIRWR. Error bars show $\pm 1$~standard deviation across 20 incidents.
    ($N = 10,688$ nodes, one of the induced subgraphs from the scalability sweep of Fig.~\ref{fig:scalability}).
    }
    \label{fig:convergence}
\end{figure}

\subsection{Real-Anomaly-Window Case Study}\label{sec:realworld}

The controlled evaluation uses simulated incidents so that ground-truth labels exist.  As a complementary check on real data, we extract three anomaly windows directly from the CG297 production traces (using elevated response time and microservice call rate to define $\bm{s}_{\mathrm{obs}}$) and inspect which service each method nominates.  
Because the true root cause is unknown, we cannot report PR@$k$. Instead, we characterize each method's top-1 pick with two label-free diagnostics. The downstream-anomaly-ratio is the fraction of the pick's callees that are themselves anomalous; it is high for a deep victim whose sub-tree is uniformly hot. 
The cascade coverage is the fraction of the observed cascade that a failure at the pick would explain; it is high for a plausible source.
Both diagnostics measure source-likeness, the property eIRWR is designed to optimize, so they characterize the qualitative behavior of each method rather than serving as neutral accuracy measures.

\begin{table}[htb]
\centering
\caption{Real-anomaly windows on CG297 (no ground-truth labels). Cascade coverage of the top-1 service; higher indicates a more source-like nomination.}
\label{tab:real-world}
\begin{tabular}{lrrr}
    \toprule
    \textbf{Method} & \textbf{Win.\ 1} & \textbf{Win.\ 2} & \textbf{Win.\ 3} \\
    \midrule
    MicroRCA   & 0.001 & 0.002 & 0.001 \\
    Corr-Fusion & 0.001 & 0.002 & 0.001 \\
    MicroHECL  & 0.001 & 0.002 & 0.001 \\
    \midrule
    \textbf{eIRWR} & \textbf{0.56} & \textbf{0.58} & 0.001 \\
    \bottomrule
\end{tabular}
\end{table}

The pattern (Table~\ref{tab:real-world}) is consistent with the controlled results.  The PPR and traversal baselines nominate services whose entire downstream sub-tree is already anomalous (downstream-anomaly-ratio $\approx 0.9$) yet which explain almost none of the cascade: the signature of a deep victim near the leaves. 
eIRWR instead nominates upstream, source-like services (downstream-anomaly-ratio $\approx 0.07$). In two of the three windows, its pick accounts for more than half of the observed cascade.
The third window is inconclusive for all methods, illustrating that unlabeled real data can neither confirm nor refute a nomination.

\section{Discussion}
\label{sec:discussion}

eIRWR's accuracy comes from shaping the restart, not from a new transition matrix. Transposing the walk is counter-productive, and uniform resilience damping is algebraically a higher restart rate. What works is concentrating restart mass on the most suspicious nodes (power-law sharpening), letting probability accumulate at cascade sources (self-loops), and allowing limited upstream exploration (backward edges). These control where the walk restarts and lingers, the axis along which a root cause differs from its louder victims.

eIRWR's gain over a restart-tuned PPR is largest at $v = 0.3$ (MRR $0.50 \to 0.75$; PR@5 $0.58 \to 0.95$).  At high visibility, the root cause is already the loudest signal, so a tuned restart nearly suffices. At very low visibility, the source emits almost no signal for sharpening to use.  In between, the source is present but buried under victims. There, concentrating the restart on the sharpest, most self-contained anomaly separates cause from victim.  This is where operational RCA is hardest and most valuable.

At $v = 0.0$ only victims show symptoms, and every aggregate-metric method is limited.  eIRWR reaches MRR~$= 0.14$, modestly above the best baseline (MicroHECL, $0.12$; Table~\ref{tab:main-results}). Trace-level methods (MicroRank~\cite{yu2021microrank}, TraceRank~\cite{yu2023tracerank}) stay robust here through per-request pass/fail evidence, at the cost of richer instrumentation.  We treat them as an upper bound on a strictly richer input, not as competitors.

The baselines show three failure profiles.  In-Degree Centrality is anomaly-unaware and ranks every incident identically; since root causes are rarely the highest-traffic services, its MRR stays below $0.001$.  CloudRanger's sampled walks cannot cover a 25K-node graph, so most nodes go unvisited, and tie-breaking dominates the ranking (MRR $<0.02$).  MicroHECL retains a usable but flat signal (MRR $\approx 0.12$).  Its fixed-depth traversal neither converges nor re-injects the observed anomaly, so extra visibility never changes its ranking; it marks the ceiling of non-iterative traversal. The PPR family (MicroRCA, MonitorRank, Corr-Fusion) is the strongest but under-restarts.  Seeded with the raw anomaly vector, the walk keeps re-injecting mass at the loudest victims, so probability settles on symptoms rather than the source.  Raising the effective restart helps, as the ablation showed.  Reshaping it, as eIRWR does, is the qualitative fix.

Scaling RCA to production topologies rewards anomaly-guided scoring, iterative convergence, and a restart distribution shaped to separate sources from victims. The PPR family provides the first two.  The third is where eIRWR advances, and it tracks the cause--victim distinction that defines the RCA problem.

\section{Conclusion}
\label{sec:conclusion}

In this paper, we investigated the localization capabilities of anomaly-restart random walks and identified the key mechanisms that drive their effectiveness. 
Our findings reveal that common methods credited for improving RCA accuracy, e.g., transposing the transition matrix and uniform resilience damping -- 
contribute no genuine localization gain: the former is counter-productive, and the latter reduces to a higher restart rate.
Instead, the real enhancement stems from techniques like power-law teleportation sharpening and belief refinement, which effectively concentrate probability on cascade sources rather than downstream effects. 
On three Alibaba topologies, eIRWR achieves MRR $= 0.75$ at moderate root-cause visibility (a $2.8$ times improvement over the best aggregate-metric baseline) and $0.94$ at high visibility. It runs in under 25\,ms on graphs of 17,000 nodes.

We acknowledge three limitations. First, our simulated incidents, while informative, do not fully represent real-world complexities such as partial failures and retries. Second, all aggregate-metric methods, including ours, face a ceiling at zero visibility, where trace-level methods provide a superior baseline. Lastly, eIRWR incurs a cost approximately four times that of basic PPR, making tuned PPR a more viable option under strict latency constraints. 
These limitations suggest several avenues for future work: learning per-service resilience from historical incidents, establishing a labeled real-incident benchmark, exploring temporal dynamics in the restart distribution, and combining restart shaping with lightweight trace sampling to recover some trace-level advantage at low visibility.